\documentclass[a4paper,UKenglish]{lipics-v2016}
%This is a template for producing LIPIcs articles.
%See lipics-manual.pdf for further information.
%for A4 paper format use option "a4paper", for US-letter use option "letterpaper"
%for british hyphenation rules use option "UKenglish", for american hyphenation rules use option "USenglish"
% for section-numbered lemmas etc., use "numberwithinsect"

\usepackage{microtype,xspace,stmaryrd}%if unwanted, comment out or use option "draft"

%\graphicspath{{./graphics/}}%helpful if your graphic files are in another directory

%%%%%%%%%%%%%%%%%% ajouts marin

\newcommand{\NP}{{\sf NP}\xspace}

\newcommand{\FPT}{{\sf FPT}\xspace}
\newcommand{\ETH}{{\sf ETH}\xspace}
\newcommand{\SETH}{{\sf SETH}\xspace}

\newcommand{\Acal}{\mathcal{A}}

\newcommand{\Rbb}{\mathbb{R}}

\newcommand{\W}{{\sf W}\xspace}

\newcommand{\poly}{{\sf P}\xspace}

%pour la compo

%\renewcommand{\c}{c}

%\newcommand{\tw}{{\sf tw}}
%\newcommand{\td}{{\sf td}}

%%%
\theoremstyle{plain}

\newtheorem{proposition}[theorem]{Proposition}
%%%%%%%%%%%%%%%%%%%

\newenvironment{proof-claim}[1][]{\par \noindent {\textcolor{darkgray}{\sffamily\bfseries Proof of the claim.#1}}\ }{\hfill$\Box$\bigskip}

\newenvironment{proof-sketch}[1][]{\par \noindent {\textcolor{darkgray}{\sffamily\bfseries Sketch of proof.#1}}\ }{\hfill$\Box$\bigskip}

\newenvironment{proof-DAG}[1][]{\par \noindent {\textcolor{darkgray}{\sffamily\bfseries Proof of Theorem~\ref{thm:algoDAG}.#1}}\ }{\hfill$\Box$\bigskip}

\usepackage{cite}
\usepackage{hyperref}
\usepackage{boxedminipage}
%\usepackage[utf8]{inputenc}

%%% SOME COMMANDS %%%%%

\newcommand{\Ocal}{\mathcal{O}}

\newcommand{\np}{{\sf NP}\xspace}

\newcommand{\fpt}{{\sf FPT}\xspace}

\newcommand{\igjournal}[1]{}

%% Problem definition

\newcommand{\paraprobl}[4]
{
  \begin{flushleft}
    \fbox{
      \begin{minipage}{13.5cm}
        \noindent {\textsc {#1}}\\
        {\bf Input:} #2\\
        {\bf Parameter:} #4\\
        {\bf Question:} #3
      \end{minipage}
    }
  \end{flushleft}
}

\newtheorem{claim}{Claim}

%%% END OF COMMANDS %%%

%\bibliographystyle{plainurl}% the recommended bibstyle

% Author macros::begin %%%%%%%%%%%%%%%%%%%%%%%%%%%%%%%%%%%%%%%%%%%%%%%%

\title{Dual parameterization of Weighted Coloring\footnote{Emails of authors: \texttt{julio@mat.ufc.br}, \texttt{campos@lia.ufc.br}, \texttt{cvinicius05@gmail.com}, \texttt{viniciussantos@dcc.ufmg.br}, \texttt{ignasi.sau@lirmm.fr}, \texttt{anasilva@mat.ufc.br}. \newline Work supported by French ANR projects DEMOGRAPH (ANR-16-CE40-0028) and ESIGMA (ANR-17-CE40-0028), and by Brazilian projects CNPq 306262/2014-2, CNPq 421660/2016-3,  CNPq 401519/2016-3, FAPEMIG,  and Funcap PNE-0112-00061.01.00/16.}
}

\titlerunning{Dual parameterization of Weighted Coloring}

%% Please provide for each author the \author and \affil macro, even when authors have the same affiliation, i.e. for each author there needs to be the  \author and \affil macros
\author[1]{J\'ulio Ara\'ujo}
\author[2]{Victor A. Campos}
\author[3]{Carlos Vin\'icius G. C. Lima}
\author[3]{Vin\'icius Fernandes dos Santos}
	\author[4]{Ignasi Sau}
\author[1]{Ana Silva}

\affil[1]{Departamento de Matem\'atica, ParGO research group, Universidade Federal do Ceará, Fortaleza, Brazil}

\affil[2]{Departamento de Computa\c{c}\~{a}o, ParGO research group, Universidade Federal do Ceará, Fortaleza, Brazil}

\affil[3]{Departamento de Ci\^encia da Computa\c{c}\~{a}o,
Universidade Federal de Minas Gerais, Belo Horizonte, Brazil}

\affil[4]{CNRS, AlGCo project team, LIRMM, Universit\'e de Montpellier, France}

\authorrunning{J. Ara\'ujo, V. A. Campos, C. V. G. C. Lima, V. F. dos Santos, I. Sau, and A. Silva} %mandatory. First: Use abbreviated first/middle names. Second (only in severe cases): Use first author plus 'et. al.'

\Copyright{J\'ulio Ara\'ujo, Victor A. Campos, Carlos V. Lima, Vin\'icius F. dos Santos, Ignasi Sau, Ana Silva}
%mandatory, please use full first names. LIPIcs license is "CC-BY";  http://creativecommons.org/licenses/by/3.0/

\subjclass{F.2.2 Nonnumerical Algorithms and Problems, G.2.2 Graph Theory.} % G.2.2 Graph Theory,
% mandatory: Please choose ACM 1998 classifications from http://www.acm.org/about/class/ccs98-html . E.g., cite as "F.1.1 Models of Computation".
\keywords{weighted coloring; parameterized complexity; dual parameterization; {\sf FPT} algorithm; polynomial kernel; split graphs; interval graphs.}% mandatory: Please provide 1-5 keywords
% Author macros::end %%%%%%%%%%%%%%%%%%%%%%%%%%%%%%%%%%%%%%%%%%%%%%%%%

%Editor-only macros:: begin (do not touch as author)%%%%%%%%%%%%%%%%%%%%%%%%%%%%%%%%%%

\EventEditors{John Q. Open and Joan R. Access}
\EventNoEds{2}
\EventLongTitle{}
\EventShortTitle{}
\EventAcronym{}
%\EventLongTitle{42nd Conference on Very Important Topics (CVIT 2016)}
%\EventShortTitle{CVIT 2018}
\EventShortTitle{}
\EventAcronym{}
%\EventYear{2016}
%\EventDate{September 22--24, 2018}
%\EventLocation{Helsinki, Finland}
\EventDate{}
\EventLocation{}
\EventLogo{}

%\SeriesVolume{42}
%\ArticleNo{23}
% Editor-only macros::end %%%%%%%%%%%%%%%%%%%%%%%%%%%%%%%%%%%%%%%%%%%%%%%

\begin{document}

\maketitle

%\begin{enumerate}
%\item abstract
%\item conclusions
%\item Section 6
%\item mail: update theorems' numbers!!
%\end{enumerate}

\begin{abstract}
Given a graph $G$, a \emph{proper $k$-coloring} of $G$ is a partition $c = (S_i)_{i\in [1,k]}$ of $V(G)$ into $k$ stable sets $S_1,\ldots, S_{k}$. Given a weight function $w: V(G) \to \Rbb^+$, the \emph{weight of a color} $S_i$ is defined as $w(i) = \max_{v \in S_i} w(v)$ and the \emph{weight of a coloring} $c$  as $w(c) = \sum_{i=1}^{k}w(i)$.  Guan and Zhu [Inf. Process. Lett., 1997] defined the \emph{weighted chromatic number} of a pair $(G,w)$, denoted by $\sigma(G,w)$, as the minimum weight of a proper coloring of $G$. The problem of determining $\sigma(G,w)$ has received considerable attention during the last years, and has been proved to be notoriously hard: for instance, it is \np-hard on split graphs, unsolvable on $n$-vertex trees in time $n^{o(\log n)}$ unless the \ETH fails, and {\sf W}[1]-hard on forests parameterized by the size of a largest tree. In this article we provide some positive results for the problem, by considering its so-called \emph{dual parameterization}: given a vertex-weighted graph $(G,w)$ and an integer $k$, the question is whether $\sigma(G,w) \leq \sum_{v \in V(G)} w(v) - k$. We prove that this problem is \FPT by providing an algorithm running in time $9^k \cdot n^{\Ocal(1)}$, and it is easy to see that no algorithm in time $2^{o(k)} \cdot n^{\Ocal(1)}$ exists under the \ETH. On the other hand, we present a kernel with at most $(2^{k-1}+1) (k-1)$ vertices, and we rule out the existence of polynomial kernels unless ${\sf NP} \subseteq {\sf coNP} / {\sf poly}$, even on split graphs with only two different weights. Finally, we identify some classes of graphs on which the problem admits a polynomial kernel, in particular interval graphs and subclasses of split graphs, and in the latter case we present lower bounds on the degrees of the polynomials.
\end{abstract}

\section{Introduction}
\label{sec:intro}

A \emph{(vertex) $k$-coloring} of a graph $G = (V,E)$ is a function $c:V(G)\to \{1,\ldots, k\}$. Such coloring $c$ is \emph{proper} if $c(u)\neq c(v)$ for every edge $\{u,v\} \in E(G)$. All the colorings we consider in this paper are proper, hence we may omit the word ``proper''. The \emph{chromatic number} $\chi(G)$ of $G$ is the minimum integer $k$ such that $G$ admits a $k$-coloring. Given a graph $G$, determining $\chi(G)$ is the goal of the classical \textsc{Vertex Coloring} problem.
%, which is one of the Karp's 21 {\sf NP}-hard problems~\cite{Karp72}.
If $c$ is a $k$-coloring of $G$, then $S_i = \{u\in V(G)\mid c(u) = i\}$ is a stable (or independent) set. %Consequently, a $k$-coloring $c$ can be seen as a partition of $V(G)$ into stable sets $S_1,\ldots, S_k$. We often see a coloring as a partition in the sequel.
With slight abuse of notation, we shall also call such a set $S_i$ a \emph{color}. %\ig{needed?}

In this paper we study a generalization of \textsc{Vertex Coloring} for vertex-weighted graphs that has been defined by Guan and Zhu~\cite{GZ97}. Given a graph $G$ and a weight function $w: V(G) \to \Rbb^+$, %\ig{shall we assume integer weights?}
the \emph{weight of a color} $S_i$ is defined as $w(i) = \max_{v \in S_i} w(v)$. Then, the \emph{weight of a coloring} $c$ is $w(c) = \sum_{i=1}^{k}w(i)$. In the \textsc{Weighted Coloring} problem, the goal is to determine the \emph{weighted chromatic number} of a pair $(G,w)$, denoted by $\sigma(G,w)$, which is the minimum weight of a coloring of $(G,w)$. A coloring $c$ of $G$ such that $w(c) = \sigma(G,w)$ is an \emph{optimal weighted coloring}. Guan and Zhu~\cite{GZ97} also defined, for a positive integer $r$, $\sigma(G,w;r)$ as the minimum of $w(c)$ among all $r$-colorings $c$ of $G$, or as $+\infty$ if no $r$-coloring exists.
Note that $\sigma(G,w) = \min_{r \geq 1}\sigma(G,w;r)$.  It is worth mentioning that the \textsc{Weighted Coloring} problem is also sometimes called \textsc{Max-Coloring} in the literature; see for instance~\cite{KaMe09,PemmarajuPR05}.
Guan and Zhu defined this problem in order to study practical applications related to resource allocation, which they describe in detail in~\cite{GZ97}.
%Among these applications, they refer to the dynamic storage allocation problem and the distributed
%dual bus network media access control protocol, which is a standard IEEE 802.6 for metropolitan networks.
One should observe that if all the vertex weights are equal to one, then $\sigma(G,w) = \chi(G)$, for every graph $G$. Consequently, determining $\sigma(G,w)$
%and $\sigma(G,w;r)$
is an {\sf NP}-hard problem on general
graphs~\cite{Karp72}. In fact, this problem has been shown to be {\sf NP}-hard even on very restricted graph classes, such as split graphs with only two different weights, interval graphs, triangle-free planar graphs with bounded degree, and bipartite graphs~\cite{DWMP02, WDEMP09, EMP06}. On the other hand, the weighted chromatic
number of cographs and of some subclasses of bipartite graphs can be found in
polynomial time~\cite{DWMP02, WDEMP09}.

The complexity of  \textsc{Weighted Coloring} on trees (and forests) has
attracted considerable attention in the literature. Guan and Zhu~\cite{GZ97} left as an open problem whether \textsc{Weighted Coloring} is polynomial on trees and, more generally, on graphs of bounded treewidth. Escoffier et al.~\cite{EMP06} found a polynomial-time approximation scheme to solve \textsc{Weighted Coloring} on bounded treewidth graphs, and
Kavitha and Mestre~\cite{KaMe09} showed that the problem is in \poly on the class of trees where vertices with degree at least three induce a stable set. But the question of Guan and Zhu has been answered only recently, when  Ara\'ujo et al.~\cite{Julio} showed that, unless the Exponential Time Hypothesis (\ETH)\footnote{The \ETH states that 3-\textsc{SAT} cannot be solved in subexponential
time; see~\cite{ImpagliazzoPZ01,ImpagliazzoP01} for more details.} fails, there is no algorithm computing the weighted chromatic number of $n$-vertex trees in time $n^{o(\log n)}$. Moreover, as discussed in~\cite{Julio}, this lower bound is tight. Very recently,
Ara\'ujo et al.~\cite{AraujoBS18} focused on the parameterized complexity of computing $\sigma(G,w)$ and $\sigma(G,w;r)$ when $G$ is a forest,
%, relying on complexity assumptions weaker than the \ETH.
and they proved that computing $\sigma(G,w)$ is {\sf W}[1]-hard parameterized by the size of a largest  tree of $G$, and that computing $\sigma(G,w;r)$  is {\sf W}[2]-hard parameterized by $r$.

\smallskip

In view of the above discussion, we can conclude that \textsc{Weighted Coloring} is a particularly hard problem from a computational point of view, and that the positive results in the literature are quite scarce. To shed some light on this direction, in this paper we adopt the perspective of parameterized complexity and consider the \emph{dual parameterization} of the problem, which we call \textsc{Dual Weighted Coloring} and is formally defined as follows:

%The parameterized problem that we study in this paper can be formally stated as follows:

\paraprobl{Dual Weighted Coloring}
{A vertex-weighted graph $(G,w)$ and a positive integer $k$.}
{Is  $\sigma(G,w) \leq \sum_{v \in V(G)}w(v) - k$?}
{$k$.}

%\newpage

Since by definition of parameterized problem (cf.~\cite{DF13,CyganFKLMPPS15}) the parameter needs to be a  non-negative integer, for the above parameterization to make sense we will assume henceforth that all vertex-weights are positive integers. We will denote by $n$ the number of vertices of the input graph of \textsc{Dual Weighted Coloring}.

The motivation for considering such parameterization is to take as the parameter the ``savings'' with respect to the trivial upper bound of $\sum_{v \in V(G)}w(v)$ on $\sigma(G,w)$. This approach has proved to be very useful for the classical \textsc{Vertex Coloring} problem, especially from the approximation point of view~\cite{DemangeGP94,HassinL94,Halldorsson95,Halldorsson96,DuhF97}. From a parameterized perspective, Chor et al.~\cite{ChorFJ04} presented an \fpt algorithm for \textsc{Dual Vertex Coloring}. Concerning kernelization, it is not difficult to see~\cite[Exercise 2.22]{CyganFKLMPPS15} that the problem admits a kernel with at most $3k$ vertices, by applying the so-called \emph{crown reduction rule} to the complement of the input graph $G$. Other results concerning dual parameterization are, for instance, \fpt algorithms for the Grundy and $b$-chromatic numbers of a graph~\cite{HavetS13}, the parameterized approximability of subset graph problems~\cite{BonnetP17}, or the existence of polynomial kernels for the \textsc{Set Cover} and \textsc{Hitting Set} problems~\cite{GutinJY11,BasavarajuFRS16}.

%\ig{Motivation for dual parameterization:} A sequence of papers by Demange,
%Grisoni and Paschos~\cite{DemangeGP94}, Hassin and Lahav~\cite{HassinL94}, Halld{\'{o}}rsson~\cite{Halldorsson95,Halldorsson96}, and
%Duh and F{\"{u}}rer~\cite{DuhF97} examined approximation algorithms for graph coloring under
%this non-standard measure.

%Our approach to obtain the (exponential) kernel is inspired from the methods used by Chor et al.~\cite{ChorFJ04} to obtain an \fpt algorithm for \textsc{Dual Vertex Coloring}. \ig{is it really the right paper?}

\medskip

\noindent\textbf{Our results.} Our first result is an \FPT algorithm for \textsc{Dual Weighted Coloring} running in time $9^k \cdot n^{\Ocal(1)}$,  based on some standard dynamic programming ideas used in coloring and partition problems (see, for example,~\cite{Law76} and~\cite[Section 3.1.2]{FK10}).
%is inspired from an algorithm by Eppstein~\cite{Eppstein03} for \textsc{Vertex Coloring} \ig{right, Vinicius?}.
It is easy to see that a subexponential algorithm is unlikely to exist. Indeed, consider the \textsc{$3$-Coloring} problem, which corresponds to the unweighted version of \textsc{Dual Weighted Coloring} with parameter $k = n-3$. Since \textsc{$3$-Coloring} cannot be solved in time $2^{o(n)}$ under the \ETH~\cite{LokshtanovMS11}, the existence of an algorithm for \textsc{Dual Weighted Coloring} running in time $2^{o(k)} \cdot n^{\Ocal(1)}$ would
imply, in particular, an algorithm for \textsc{$3$-Coloring} running in time $2^{o(k)} \cdot n^{\Ocal(1)} = 2^{o(n)}$, contradicting the \ETH.

Our main contribution concerns the existence of (polynomial) kernels for \textsc{Dual Weighted Coloring}. By the well-known equivalence of admitting an \FPT algorithm and a kernel (cf.~\cite{DF13,CyganFKLMPPS15}), the \FPT algorithm mentioned above directly yields a kernel for \textsc{Dual Weighted Coloring} of size at most $9^k$. On the one hand, we considerably improve this bound by providing a kernel with at most $(2^{k-1}+1) (k-1)$ vertices, inspired by an approach attributed to Jan Arne Telle~\cite[Exercise 4.12.12]{DF13} for obtaining a quadratic kernel for  \textsc{Dual Vertex Coloring}.
%\ig{Vin\'icius, where did you take this idea of the maximum matching from? Chor et al.~\cite{ChorFJ04} does not seem to be the right reference}
On the other hand, we complement this result by showing that, unlike the \textsc{Dual Vertex Coloring} problem, \textsc{Dual Weighted Coloring} does {\sl not} admit a polynomial kernel unless ${\sf NP} \subseteq {\sf coNP} / {\sf poly}$, even on split graphs with only two different weights. We prove this result by a  polynomial parameter transformation from the \textsc{Set Cover} problem parameterized by the size of the universe, proved not to admit polynomial kernels unless ${\sf NP} \subseteq {\sf coNP} / {\sf poly}$ by Dom et al.~\cite{DomLS14}. Our reduction is an appropriate modification of a reduction of Demange et al.~\cite{DWMP02} to prove the \NP-hardness of \textsc{Weighted Coloring} on split graphs.

Motivated by the above hardness result, it is natural to identify graph classes on which the \textsc{Dual Weighted Coloring} problem admits polynomial kernels. We prove that this is the case of graph classes with bounded clique number and of interval graphs, for which we present a linear and a cubic kernel, respectively.  Finally, we identify subclasses of split graphs admitting polynomial kernels. Namely, we prove that \textsc{Dual Weighted Coloring} restricted to split graphs where each vertex in the clique has at most $d$ non-neighbors in the stable set, for some constant $d \geq 2$, admits a kernel with at most
$k^{d}$ vertices. We show that the dependency on $d$ in the exponent is necessary, by proving that  for any $\varepsilon > 0$, a kernel with $\Ocal(k^{\frac{d-3}{2}-\varepsilon})$ vertices on that graph class does not exist unless ${\sf NP} \subseteq {\sf coNP} / {\sf poly}$. In other words, we rule out the existence of a {\sl uniform} kernel, that is, a kernel of size $f(d) \cdot k^{\Ocal(1)}$ for any function $f$.

\medskip
\noindent\textbf{Organization of the paper.} In Section~\ref{sec:prelim} we present some basic preliminaries about graphs and parameterized complexity. In Section~\ref{sec:FPT-algo} we present the \fpt algorithm and in Section~\ref{sec:kernels} we provide the kernelization results. Finally, we conclude the article in Section~\ref{sec:further}.

\section{Preliminaries}
\label{sec:prelim}

\noindent \textbf{Graphs.} We use standard graph-theoretic notation, and we consider simple undirected graphs without loops or multiple edges; see~\cite{Die10} for any undefined terminology. Given a graph $G=(V,E)$, $X \subseteq V$, and $v \in V$, we denote $N_X(v) = N(v) \cap X$, where $N(v) = \{u \in V \mid \{u,v\} \in E\}$, and $\overline{N_X}(v) = X \setminus \{N_X(v) \cup \{v\}\}$. An \emph{antimatching} in a graph $G$ is a matching in the complement of $G$. Two vertices $u,v \in V(G)$ are \emph{twins} if $N(u) = N(v)$. A vertex $v \in V(G)$ is \emph{universal} if $N(v) = V(G) \setminus \{v\}$. A graph $G$ is a \emph{split graph} if $V(G)$ can be partitioned into an independent set and a clique, and an \emph{interval graph} if one can associate a real interval with each vertex, so that two vertices are adjacent if and only if the corresponding intervals intersect.

%\ig{put any non-standard notation here}

\medskip

\noindent \textbf{Parameterized complexity.} We refer the reader to~\cite{DF13,CyganFKLMPPS15} for basic background on parameterized complexity, and we recall here only some basic definitions, with special emphasis on tools for polynomial kernelization.
A \emph{parameterized problem} is a language $L \subseteq \Sigma^* \times \mathbb{N}$.  For an instance $I=(x,k) \in \Sigma^* \times \mathbb{N}$, $k$ is called the \emph{parameter}. %Given a classical (non-parameterized) decision problem $L_{c} \subseteq \Sigma^*$ and a function $\kappa: \Sigma^* \rightarrow \mathbb{N}$, we denote by
%$L_{c}/\kappa = \{(x,\kappa(x)\} \mid x \in L_{c}\}$ the associated parameterized problem. \ig{really?}
A parameterized problem is \emph{fixed-parameter tractable} ({\sf FPT}) if there exists an algorithm $\Acal$, a computable function $f$, and a constant $c$ such that given an instance $I=(x,k)$,
$\Acal$ (called an {\sf FPT} \emph{algorithm}) correctly decides whether $I \in L$ in time bounded by $f(k) \cdot |I|^c$.

%For instance, the \textsc{Vertex Cover} problem parameterized by the size of the solution is \fpt.

%A parameterized problem is in \xp if there exists an algorithm $\Acal$ and two computable functions $f$ and $g$ such that given an instance $I=(x,k)$,
%$\Acal$ (called an \xp \emph{algorithm}) correctly decides whether $I \in L$ in time bounded by $f(k) \cdot |I|^{g(k)}$. For instance,  the \textsc{Clique} problem parameterized by the size of the solution is in \xp.
%
%A parameterized problem with instances of the form $I=(x,k)$ is \emph{para-\np-hard} if it is \np-hard for some fixed {\sl constant} value of the parameter $k$. For instance, the \textsc{Vertex Coloring} problem parameterized by the number of colors is para-\np-hard. Note that, unless $\p = \np$, a para-\np-hard problem cannot be in \xp, hence it cannot be \fpt either.

A fundamental concept in parameterized complexity is that of \emph{kernelization}. A kernelization
algorithm, or just \emph{kernel}, for a parameterized problem $\Pi $ takes an
instance~$(x,k)$ of the problem and, in time polynomial in $|x| + k$, outputs
an instance~$(x',k')$ such that $|x'|, k' \leqslant g(k)$ for some
function~$g$, and $(x,k) \in \Pi$ if and only if $(x',k') \in \Pi$. The function~$g$ is called the \emph{size} of the kernel and may
be viewed as a measure of the ``compressibility'' of a problem using
polynomial-time preprocessing rules. A kernel is called \emph{polynomial} (resp. \emph{linear}) if the function $g(k)$ is a polynomial (resp. linear) function in $k$.
%It is nowadays a well-known result in the area that a
%decidable problem is in \fpt if and only if it has a kernelization algorithm.
%However, the kernel that one obtains in this way is typically of size at least
%exponential in the parameter. A natural problem in this context is to find
%polynomial or linear kernels for problems that are in \fpt.
A breakthrough result of Bodlaender et al.~\cite{BodlaenderDFH09} gave the first framework for proving that certain parameterized problems
do not admit polynomial kernels, by establishing so-called \emph{composition algorithms}. Together with a result of Fortnow and
Santhanam~\cite{FortnowS11} this allows to exclude polynomial kernels under the assumption that ${\sf NP} \nsubseteq {\sf coNP} / {\sf poly}$, otherwise implying
a collapse of the polynomial hierarchy to its third level~\cite{Yap83}. Very successful notions for proving such type of result are those of \emph{cross-composition}, introduced by Bodlaender et al.~\cite{BodlaenderJK14},  and of \emph{polynomial parameter transformation}, introduced by Bodlaender et al.~\cite{BodlaenderTY11}. We need to define the latter. A polynomial parameter transformation from a parameterized problem $P$ to a parameterized problem $Q$ is an algorithm that, given an instance $(x,k)$ of $P$, computes in polynomial time an equivalent instance $(x',k')$ of $Q$ such that $k'$ is bounded by a polynomial depending only on $k$.

\smallskip
%\ig{probably the following two paragraphs can be removed}
Within parameterized problems, the class {\sf W}[1] may be seen as the parameterized equivalent to the class \np of classical optimization problems. Without entering into details (see~\cite{DF13,CyganFKLMPPS15} for the formal definitions), a parameterized problem being {\sf W}[1]-\emph{hard} can be seen as a strong evidence that this problem is {\sl not} \fpt.
%The canonical example of {\sf W}[1]-hard problem is \textsc{Independent Set} parameterized by the size of the solution.
%\footnote{Given a graph $G$ and a parameter $k$, the problem is to decide whether there exists $S \subseteq V(G)$ such that $|S| \geq k$ and $E(G[S]) = \es$.}.
The class {\sf W}[2] of parameterized problems is a class that contains $\W$[1], and such that the problems that are {\sf W}[2]-\emph{hard} are  even more unlikely to be \fpt than those that are {\sf W}[1]-hard (again, see~\cite{DF13,CyganFKLMPPS15} for the formal definitions).
%The canonical example of {\sf W}[2]-hard problem is \textsc{Dominating Set} parameterized by the size of the solution.
%\footnote{Given a graph $G$ and a parameter $k$, the problem is to decide whether there exists $S \subseteq V(G)$  such that  $|S| \leq k$ and $N[S] = V(G)$.}.

%For $i \in \intv{1,2}$, to transfer ${\sf W}[i]$-hardness from one problem to another, one uses a \emph{parameterized reduction}, which given an input $I=(x,k)$ of the source problem, computes in time $f(k) \cdot |I|^c$, for some computable function $f$ and a constant $c$, an equivalent instance $I'=(x',k')$ of the target problem, such that $k'$ is bounded by a function depending only on $k$.
%Hence, an equivalent definition of $\W$[1]-hard (resp. $\W$[2]-hard) problem is any problem that admits a parameterized reduction from \textsc{Independent Set} (resp. \textsc{Dominating Set}) parameterized by the size of the solution.

\section{\FPT algorithm}
\label{sec:FPT-algo}

In this section we present an \FPT algorithm for the \textsc{Dual Weighted Coloring} problem.

\begin{theorem}\label{thm:FPT}
The \textsc{Dual Weighted Coloring} problem can be solved in time $9^k \cdot n^{\Ocal(1)}$.
%State the EXACT running time here please, as we need it in order to motivate the (smaller) kernelization algorithm in the next section}
\end{theorem}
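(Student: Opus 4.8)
The plan is to reduce, by a trimming argument, to the task of packing a few independent sets with large total ``saving'', and then to solve this packing task by a Lawler-style dynamic programming over the subsets of a ground set of size $\Ocal(k)$. A proper coloring of $G$ is just a partition of $V(G)$ into independent sets, and singleton classes contribute nothing to the saving $\sum_{v}w(v)-w(c)$; hence $(G,w,k)$ is a yes-instance if and only if there are pairwise disjoint sets $C_1,\dots,C_t\subseteq V(G)$, each inducing an independent set of $G$ with $|C_i|\ge 2$ and with $\sum_{i=1}^{t}\bigl(w(C_i)-\max_{v\in C_i}w(v)\bigr)\ge k$, where $w(C_i)=\sum_{v\in C_i}w(v)$. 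Moreover, whenever such sets exist they can be chosen with $t\le k$ and $\bigl|\bigcup_{i}C_i\bigr|\le 2k$: starting from any valid family, process the classes one by one and retain inside each only as many non-maximum vertices as needed until the running sum of their weights first reaches $k$; since all weights are positive integers this keeps at most $k$ such vertices over at most $k$ classes, each trimmed class is still independent and still contains a maximum-weight vertex, and discarding everything else costs nothing.

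Next I would exhibit, in polynomial time, a set $Z\subseteq V(G)$ with $|Z|\le 2k$ that contains the vertices of some optimal such family. Compute a maximum matching $M$ of the complement $\overline{G}$; if $|M|\ge k$, answer \textsc{yes}, since the $k$ matched pairs are size-two independent sets of $G$, each saving at least one unit. Otherwise $X:=V(M)$ has at most $2(k-1)$ vertices and $Q:=V(G)\setminus X$ induces a clique of $G$; hence every independent set of $G$ — in particular every $C_i$ — meets $Q$ in at most one vertex and therefore meets $X$ in at least one. Thus every $C_i$ is a nonempty independent subset of $X$ possibly enlarged by a single vertex of $Q$, and one may take $Z=X$, charging the companion $Q$-vertex (if any) of a class to the cost of that class.

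For the dynamic programming, for $S\subseteq Z$ let $\beta(S)$ be the best saving of a single class whose $X$-part is exactly $S$: it equals $-\infty$ if $G[S]$ is not independent, and otherwise $w(S)-\max_{v\in S}w(v)+\min\bigl(\max_{v\in S}w(v),\,\max\{\,w(q):q\in Q,\ q\notin\bigcup_{v\in S}N(v)\,\}\bigr)$, the last maximum being $0$ when no such $q$ exists; each $\beta(S)$ is computable in polynomial time, and $\beta(\{x\})\ge 0$ encodes leaving $x$ as a singleton class. Fixing for each nonempty $S$ an anchor element $x_S\in S$, set $f(\emptyset)=0$ and $f(S)=\max_{x_S\in S'\subseteq S}\bigl(f(S\setminus S')+\beta(S')\bigr)$; then the instance is a yes-instance precisely when $|M|\ge k$ or $f(Z)\ge k$. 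Completeness is clear: carve the $X$-parts of the $C_i$ off one at a time, and $\beta$ only overestimates each class's saving. Since the recurrence for each $S$ ranges over all subsets of $S$, the running time is $\sum_{S\subseteq Z}2^{|S|}\cdot n^{\Ocal(1)}=3^{|Z|}\cdot n^{\Ocal(1)}\le 3^{2k}\cdot n^{\Ocal(1)}=9^{k}\cdot n^{\Ocal(1)}$.

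The step I expect to be the main obstacle is the soundness of the last reduction, namely the treatment of the clique $Q$: the additive definition of $\beta$ permits two distinct classes to be credited with the same vertex of $Q$, whereas classes must be vertex-disjoint, so in principle $f(Z)$ could exceed the true maximum saving. Closing this gap needs a short exchange argument — exploiting that $|M|<k$ caps the number of classes that can receive a $Q$-vertex — to show that any such conflict can be broken while the total saving stays at least $k$; alternatively one folds a bounded number of the heaviest eligible vertices of $Q$ into $Z$ while keeping $|Z|\le 2k$. The remaining checks are routine bookkeeping.
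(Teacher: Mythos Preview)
Your reduction via a maximum antimatching to a small set $X$ plus a residual clique $Q$ is exactly the paper's first step, but the dynamic program you propose is genuinely unsound, and neither of your suggested repairs recovers the $9^{k}$ bound. The over-counting is not hypothetical: take $G$ on $\{x_1,x_2,x_3,x_4,q\}$ with $E(G)=\{x_1x_2,\,x_3x_4\}$, all weights $1$ except $w(q)=100$, and $k=4$. One maximum antimatching is $\{x_1x_3,\,x_2x_4\}$, so that $X=\{x_1,x_2,x_3,x_4\}$ and $Q=\{q\}$. Then $\beta(\{x_i\})=\min(1,100)=1$ for every $i$, and the partition of $X$ into singletons already gives $f(X)\ge 4$; yet the true maximum saving is only $3$ (for instance via $\{x_1,x_3,q\}$ and $\{x_2,x_4\}$), because $q$ can join at most one class. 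Your algorithm would answer \textsc{yes} on a \textsc{no}-instance. An exchange argument cannot help when $Q$ contributes a single usable vertex, and ``folding heavy $Q$-vertices into $Z$'' may require one per class, hence up to $k$ of them, which inflates $|Z|$ to about $3k$ and the running time to $27^{k}\cdot n^{\Ocal(1)}$ rather than $9^{k}\cdot n^{\Ocal(1)}$.

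The paper avoids the conflict by a different table shape. Writing $K=Q=\{v_1,\dots,v_{|K|}\}$, it defines $T(X,i)$, for $X\subseteq V(\bar M)$ and $0\le i\le |K|$, as the minimum weight of a coloring of $G[K\cup X]$ in which no vertex of $X$ receives the color of any $v_j$ with $j>i$. The values $T(X,0)$ are computed by the same subset recurrence you wrote (colors disjoint from $K$), and for $i\ge 1$,
\[
T(X,i)\;=\;\min\Bigl(\,T(X,i-1),\ \min_{\substack{\emptyset\neq S\subseteq X\\ S\cup\{v_i\}\text{ stable}}} T(X\setminus S,\,i-1)+w(S\cup\{v_i\})-w(v_i)\Bigr).
\]
Because the clique vertices are swept one at a time through the index $i$, each $v_j$ is attached to at most one part and no double use is possible; and for every fixed $i$ the pairs $(X,S)$ with $S\subseteq X\subseteq V(\bar M)$ number $3^{|V(\bar M)|}\le 3^{2(k-1)}$, giving total time $9^{k}\cdot n^{\Ocal(1)}$. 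The idea missing from your attempt is precisely this second index ranging over the clique.
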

\begin{proof}
We start by computing a maximum unweighted antimatching $\bar{M}$ in $G$; note that this can be done in polynomial time by computing a maximum matching in the complement of $G$. If $|\bar{M}| \geq k$, since we assume that all the vertex weights are at least 1, by putting in the same color class each pair of vertices that belong to a non-edge of $\bar{M}$ and coloring any other vertex with a new color, we obtain a coloring of $G$ with weight at most $\sum_{v \in V(G)}w(v) - k$, and we can output a constant-sized {\sf yes}-instance. Thus, we assume henceforth that $|\bar{M}| \leq k-1$.

Let $V(\bar{M}) \subseteq V(G)$ be the set of vertices that appear in the non-edges in $\bar{M}$. Since $\bar{M}$ is maximum, the set of vertices $K:= V(G) \setminus V(\bar{M})$ induces a clique in $G$. Note that, in any proper coloring $c$ of $G$, at least $|K|$ colors will be needed.
Let $K = \{v_1, \ldots, v_{|K|}\}$ and let $c(v_i) = c_i$.
Hence, it remains just to color the vertices in $V(\bar{M})$, which may be colored with some colors previously used in $K$ or with new ones.

%We define the \emph{saving} $s(X)$ of a set $X \subseteq V(G)$ as $s(X) = \sum_{v\in X} w(v) - max_{v \in X} w(v)$ if $X \not= \emptyset$ and $s(X) = 0$ otherwise. Informally, if $X$ is a color, $s(X)$ can be understood as the total weight of vertices that were not selected as the heaviest of the set $X$ and hence do not contribute for the total weight of the coloring.
%Also, note that, for a $k$-coloring $c$, $w(c)$ can be rewritten as
%$$ w(c) = \sum_{v\in V(G)} - \sum_{i=1}^{k} s(S_i).$$

Let $X \subseteq V(\bar{M})$ and $0 \leq i \leq |K|$. We define
$T(X, i)$ as the minimum weight of a coloring of $G[K \cup X]$ such that no color $c_j$, for $j > i$, is assigned to a vertex of $X$, i.e., the colors assigned to vertices in $X$ are either from the set $\{c_1, \ldots, c_i\}$ or new colors not assigned to any vertex of $K$. Note that, by definition, $\sigma(G,w) = T(V(\bar{M}),|K|)$.

We now describe how to compute $T(X, i)$ for every $X \subseteq V(\bar{M})$ and every $0 \leq i \leq |K|$.

If $X = \emptyset$, then, for any $i$,
$$T(\emptyset, i)= \sum_{v\in K} w(v),$$
since $K$ is a clique.
%and for each of its elements a unique color must be assigned.
This can be done in linear time.

If $i = 0$, then the colors used in $K$ and in $X$ are disjoint. By applying brute force over all possible stable sets of $X$ we get
$$T(X, 0) = \min_{\substack{\emptyset \not =S \subseteq X \\ S\textrm{ stable}}} T(X \setminus S, 0) + w(S),$$
where $w(S) = \max_{v \in S}w(v)$. The values $T(X, 0)$, for all possible sets $X \subseteq V(\bar{M})$, can be computed in time $$\left(\sum_{j=0}^{|V(\bar{M})|} {|V(\bar{M})| \choose j}\cdot 2^j \right) \cdot n^{\Ocal(1)}= 3^{|V(\bar{M})|}\cdot n^{\Ocal(1)},$$
corresponding to considering the sets $X$ by increasing size and choosing an arbitrary subset $S$ inside $X$, and where the polynomial factor comes from checking whether such a  set $S$ is stable or not.

Now, if $i > 0$ and $X \not= \emptyset$, we have two possibilities, namely either color $c_i$ is used in $V(\bar{M})$ or not. If $c_i$ is not used, clearly we have $T(X, i) = T(X, i-1)$. Otherwise, there is a non-empty set $S \subseteq X$ of vertices colored with color $c_i$. Therefore, in the latter possibility we can iterate over every possible nonempty set $S$ and compute $T(X,i)$ as follows.
$$ T(X, i) = \min_{\substack{\emptyset \not =S \subseteq X \\ S \cup \{v_i\} \textrm{ stable}}} T(X \setminus S, i-1) + w(S \cup \{v_i\}) - w(v_i).
$$
Here we have to subtract the weight of $v_i$, which was, in the partial solution $T(X \setminus S, i-1)$, the weight of color $c_i$, and replace it by $w(S \cup \{v_i\})$, which may be greater. As in the previous case, for every $i$ it is possible to compute $T(X,i)$ for every $X$ in time $3^{|V(\bar{M})|} \cdot n^{\Ocal(1)}$.

Hence, since $|V(\bar{M})| \leq 2k-2$,  in time bounded by $3^{2k} \cdot n^{\Ocal(1)} = 9^{k} \cdot n^{\Ocal(1)}$ we can compute $T(V(\bar{M}),|K|) = \sigma(G,w)$ and answer whether $\sigma(G,w) \leq \sum_{v \in V(G)}w(v) - k$ or not.
\end{proof}

\section{Kernelization results}
\label{sec:kernels}

In this section we focus on the existence of (polynomial) kernels for \textsc{Dual Weighted Coloring}.

\begin{theorem}\label{thm:kernel}
The \textsc{Dual Weighted Coloring} problem admits a kernel with at most $(2^{k-1}+1) \cdot (k-1)$ vertices.
\end{theorem}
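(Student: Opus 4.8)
The plan is to obtain the kernel through a short sequence of reduction rules applied on top of the preprocessing used in the proof of Theorem~\ref{thm:FPT}.

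First I would compute a maximum antimatching $\bar{M}$ of $G$. If $|\bar{M}|\ge k$ then, since every weight is at least $1$, colouring each of $k$ non-edges of $\bar{M}$ with one colour and every other vertex with a private colour gives a coloring of weight at most $\sum_{v\in V(G)}w(v)-k$, so I output a trivial \textsf{yes}-instance of constant size. Otherwise $m:=|\bar{M}|\le k-1$, the set $Y:=V(\bar{M})$ satisfies $|Y|=2m\le 2(k-1)$, and $K:=V(G)\setminus Y$ induces a clique. The two structural facts I would rely on are: (i) in any proper colouring every colour class meets $K$ in at most one vertex, so a non-singleton class meeting $K$ has the form $\{v\}\cup S$ with $v\in K$, $\emptyset\ne S\subseteq\overline{N_Y}(v)$ and $S$ stable; and (ii) $\sigma(G,w)\le\sum_v w(v)-k$ holds if and only if $G$ has a proper colouring of \emph{saving} $\sum_{\text{class }C}\bigl(\sum_{u\in C}w(u)-\max_{u\in C}w(u)\bigr)$ at least $k$, which, using that all weights are $\ge 1$, is in turn equivalent to having a colouring with at least $k$ vertices that are not the chosen maximum-weight \emph{representative} of their class.

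Next come the reduction rules on $K$. The easy one removes universal vertices: if $v$ is universal then $\sigma(G,w)=\sigma(G-v,w)+w(v)$ and the total weight also drops by exactly $w(v)$, so $(G,w,k)$ and $(G-v,w,k)$ are equivalent; after exhaustive application every vertex of $K$ has a non-empty non-neighbourhood in $Y$. The main rule groups the vertices of $K$ by \emph{type}, where $u$ and $v$ have the same type when $\overline{N_Y}(u)=\overline{N_Y}(v)$, and keeps, for each type, only the $k-1$ vertices of largest weight, deleting the rest. One direction of safety is immediate: a colouring of the reduced graph of saving $\ge k$ extends to $G$ by giving each deleted vertex a private colour, without changing the saving. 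The reverse direction is the exchange argument: start from a colouring $c$ of $G$ of saving $\ge k$ chosen to maximise the total weight of surviving vertices that are non-representatives; if a deleted vertex $v$ of type $T$ occurs in a non-singleton class of $c$, use that all $k-1$ heaviest vertices of type $T$ survive and are at least as heavy as $v$, together with the fact that a colouring of saving just above $k$ can use only boundedly many clique vertices of a fixed type, to find a surviving $u$ of type $T$ available to replace $v$; since $u$ and $v$ are non-adjacent to exactly the same subset of $Y$ the replacement keeps $c$ proper, and since $w(u)\ge w(v)$ it does not lower the saving while strictly increasing the potential, contradicting the choice of $c$. Finally I would count: $|Y|\le 2(k-1)$, and the number of surviving types, organised around the $\le k-1$ non-edges of $\bar{M}$ rather than around arbitrary subsets of $Y$, is at most $2^{k-1}$, each contributing at most $k-1$ vertices; careful bookkeeping then yields $|V(G)|\le(2^{k-1}+1)(k-1)$.

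The hard part is the exchange argument together with the accompanying count. The delicate cases are when the deleted clique vertex $v$ is itself the heaviest (the representative) of its class, where one must re-designate representatives or argue that such classes are never needed; when every heavy vertex of type $T$ already sits inside a non-singleton class, where a single local swap does not suffice and one has to reroute several classes simultaneously; and, on the counting side, bringing the naive bound of $2^{2(k-1)}$ types (one per subset of $Y$) down to the claimed $2^{k-1}$, which seems to require exploiting the matching structure of $Y$ rather than treating $Y$ as an unstructured set. I expect essentially all of the technical work to be concentrated in getting these three points right.
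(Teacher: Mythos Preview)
Your plan is the same as the paper's: compute a maximum antimatching $\bar{M}$, delete universal vertices, partition $K$ by neighbourhood in $V(\bar{M})$, and keep only the heaviest few vertices of each type. What you are missing is the single structural lemma that dissolves both of your ``hard parts'' at once. By maximality of $\bar{M}$, any type $C$ with $|C|\ge 2$ must have at least one neighbour in every non-edge of $\bar{M}$ (otherwise two vertices of $C$ together with the two endpoints of that non-edge would enlarge $\bar{M}$). Consequently the common non-neighbourhood of $C$ in $V(\bar{M})$ has size at most $|\bar{M}|$, so in \emph{any} proper colouring at most $|\bar{M}|$ vertices of $C$ can lie in non-singleton colour classes; keeping the $|\bar{M}|$ heaviest vertices therefore always leaves a surviving singleton of the same type to swap into any non-singleton role, and the elementary inequality $\max\{w(S_i),w(v)\}+w(u)\le w(S_i)+w(v)$ (valid since $u\in S_i$ and $w(u)\le w(v)$) shows each swap never increases the total weight. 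None of your three ``delicate cases'' materialises: there is no need for a potential function, no simultaneous rerouting, and no special treatment of representatives. The same maximality argument, now applied to two distinct classes and one non-edge, shows that for each non-edge at most one \emph{fixed} endpoint can ever be missed by a class of size $\ge 2$, so such types are indexed by subsets of a set of at most $k_{\sf n}\le k-1$ ``missing'' vertices, giving $2^{k-1}-1$ directly; the remaining singleton ``special'' classes (those missing both endpoints of some non-edge) number at most $k_{\sf s}$, and $k_{\sf s}+k_{\sf n}\le k-1$ finishes the count.
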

\begin{proof}
We start with  the following trivial polynomial-time reduction rule.

\medskip
\noindent\textbf{Rule 1}. If $G$ contains a universal vertex, delete it.

\begin{claim}\label{claim:rule1-safe}
Rule~1 is safe.
\end{claim}
\begin{proof} Since a universal vertex $u$ appears as a singleton in any proper coloring of $G$, it follows that $\sigma(G,w) \leq \sum_{v \in V(G)}w(v) - k$ if and only if $\sigma(G - \{u\},w) \leq \sum_{v \in V(G)\setminus \{u\}}w(v) - k$.
\end{proof}

As in the proof of Theorem~\ref{thm:FPT}, we compute in polynomial time a maximum unweighted antimatching $\bar{M}$ in $G$. %; note that this can be done in polynomial time by computing a maximum matching in the complement of $G$.
Again, if $|\bar{M}| \geq k$, we can output a constant-sized {\sf yes}-instance, so we assume henceforth that $|\bar{M}| \leq k-1$. Let again $V(\bar{M}) \subseteq V(G)$ be the set of vertices that appear in the non-edges in $\bar{M}$, and recall that since $\bar{M}$ is maximum, the set of vertices $K= V(G) \setminus V(\bar{M})$ induces a clique in $G$.

We now partition $K$ into a set of equivalence classes $\mathcal{C}$ according to the neighborhood in $V(\bar{M})$. That is, $u,v \in K$ belong to the same class in $\mathcal{C}$ if and only if $N_{V(\bar{M})}(u) = N_{V(\bar{M})}(v)$. Note that $\mathcal{C}$ can be constructed in polynomial time, by iteratively processing the vertices of $K$, comparing the neighborhood in $V(\bar{M})$ of the currently processed vertex $v$ with those of the already processed vertices, and creating a new class containing only $v$ if no class has exactly the set $N_{V(\bar{M})}(v)$ as neighbors in $V(\bar{M})$.
Given an equivalence class $C \in \mathcal{C}$ and a non-edge  $\bar{e} \in \bar{M}$, we denote by $N_{\bar{e}}(C)$ the set of neighbors of any vertex in $C$ in  the set consisting of the two endpoints of $\bar{e}$. We proceed to analyze the number and the size of the classes in $\mathcal{C}$. We start with an easy claim.

\begin{claim}\label{claim:1}
Let $C \in \mathcal{C}$ be an equivalence class with $|C| \geq 2$. For every non-edge $\bar{e} \in \bar{M}$ it holds that $|N_{\bar{e}}(C)| \geq 1$.
\end{claim}
\begin{proof}
Suppose for contradiction that $|N_{\bar{e}}(C)| = 0$,  let $x,y$ be the endvertices of $\bar{e}$, and let $u,v$ be any two vertices in $C$. Then we can obtain from $\bar{M}$ a larger antimatching $ \bar{M}'$ by replacing the non-edge $\{x,y\}$ with the two non-edges  $\{u,x\}$ and $\{v,y\}$, a contradiction.
\end{proof}

In the next claim we restrict further the neighborhoods of the equivalence classes in $V(\bar{C})$.

\begin{claim}\label{claim:3}
Let $C_1,C_2 \in \mathcal{C}$ be two equivalence classes and let $\bar{e} \in \bar{M}$ be a non-edge with endpoints $x$ and $y$. Then it is not possible that $x \notin N_{\bar{e}}(C_1)$ and $y \notin N_{\bar{e}}(C_2)$, or vice versa.
\end{claim}
\begin{proof}
Suppose for contradiction that $x \notin N_{\bar{e}}(C_1)$ and $y \notin N_{\bar{e}}(C_2)$, and let $u \in C_1$ and $v \in C_2$. Then we can obtain from $\bar{M}$ a larger antimatching $ \bar{M}'$ by replacing the non-edge $\{x,y\}$ with the two non-edges  $\{u,x\}$ and $\{v,y\}$, a contradiction.
\end{proof}

We call an equivalence class $C \in \mathcal{C}$ \emph{special} if for some non-edge  $\bar{e} \in \bar{M}$, $|N_{\bar{e}}(C)| = 0$, and we call such an $\bar{e}$ an \emph{special} non-edge; otherwise we call an equivalence class \emph{normal}. We call a non-edge $\bar{e} \in \bar{M}$ \emph{normal} if for every $C \in \mathcal{C}$,  $|N_{\bar{e}}(C)| \geq 1$. Let $k_{\sf s}$ and $k_{\sf n}$ be the number of special and normal non-edges in $\bar{M}$, respectively, and note that $k_{\sf s} + k_{\sf n} = |\bar{M}| \leq k-1$.

By Claim~\ref{claim:1}, every special class contains exactly one vertex. If $\bar{e} \in \bar{M}$ is a special non-edge and $C \in \mathcal{C}$ is such that $|N_{\bar{e}}(C)| = 0$, then Claim~\ref{claim:3} implies that for every other class $C' \in \mathcal{C}$ different from $C$, it holds that $|N_{\bar{e}}(C')| = 2$. Therefore, the number of special classes in $\mathcal{C}$ is at most the number of special  non-edges in $\bar{M}$, that is, at most $k_{\sf s}$.

Let $\bar{e} \in \bar{M}$ be a normal non-edge with endpoints $x,y$, let $C \in \mathcal{C}$ be such that $|N_{\bar{e}}(C)| = 1$, and assume that $N_{\bar{e}}(C) = \{x\}$. Then Claim~\ref{claim:3} implies that for every class $C' \in \mathcal{C}$ such that $|N_{\bar{e}}(C')| = 1$, it holds that $N_{\bar{e}}(C') = \{x\}$. Hence, every normal non-edge $ \bar{e} \in \bar{M}$ has at most one endpoint that has some non-neighbor in $K$; we call such a vertex the \emph{missing} vertex of $ \bar{e}$. This means that for every normal class $C \in \mathcal{C}$ and every non-edge $ \bar{e} \in \bar{M}$, there are {\sl exactly two} possibilities: either $|N_{\bar{e}}(C)| = 2$, or $|N_{\bar{e}}(C)|=1$ and $N_{\bar{e}}(C)$ consists of the endpoint of  $\bar{e}$ distinct from its missing vertex. Moreover, by Claim~\ref{claim:3} the latter case can only occur if $\bar{e}$ is a normal non-edge. Therefore, the number of normal classes in $\mathcal{C}$ is at most $2^{k_{\sf n}}-1$, corresponding to all the choices of neighborhoods in the set consisting of the missing vertices in the normal edges in $\bar{M}$, and excluding the class $C$ with $N_{V(\bar{M})}(C) = V(\bar{M})$, since these vertices would be deleted by Rule~1.

Finally, in order to bound the size of the equivalence classes in $\mathcal{C}$, we state the following reduction rule, which says that it is enough to keep, for each equivalence class, the $|\bar{M}|$ heavier vertices.
%, the first one being trivial, so we omit the safeness proof.

\medskip
\noindent\textbf{Rule 2}. Suppose there exists an equivalence class $C \in \mathcal{C}$  with $|C| > |\bar{M}|$. Let $W \subseteq C$ be a subset of vertices with $|W| = |\bar{M}|$ and such that, if $u \in W$ and $v \in C \setminus W$, then $w(u) \geq w(v)$. Delete from $G$ all the vertices in $C \setminus W$.

\begin{claim}\label{claim:2}
Rule~2 is safe.
\end{claim}
\begin{proof} We need to prove that if $(G',w,k)$ results from $(G,w,k)$ after the application of Rule~2, then $(G,w,k)$ and $(G',w,k)$ are equivalent instances of \textsc{Dual Weighted Coloring}.

Assume first that $\sigma(G',w) \leq \sum_{v \in V(G')}w(v) - k$, and let $c'$ be a coloring of $(G',w)$ achieving this bound. We define a coloring $c$ of $(G,w)$ starting from $c'$ and creating a new color for each of the vertices in $C \setminus W$. Clearly, $w(c) = w(c') + \sum_{v \in C \setminus W}w(v) \leq \sum_{v \in V(G)}w(v) - k$.

Conversely, assume that $\sigma(G,w) \leq \sum_{v \in V(G)}w(v) - k$, and let $c$ be a coloring of $(G,w)$ achieving this bound. Due to Rule~1, we can clearly assume that $|\bar{M}| \geq 1$. Hence, since Rule~2 has been applied on $C \in \mathcal{C}$, it follows that $|C| \geq 2$. Thus, by Claim~\ref{claim:1}, we have that for every non-edge $\bar{e} \in \bar{M}$, $|N_{\bar{e}}(C)| \geq 1$. This implies, together with the fact that the vertices in $C$ induce a clique, that in the coloring $c$ of $(G,w)$, at most $|\bar{M}|$ vertices of $C$ appear in colors containing vertices of $V(\bar{M})$, and every other vertex in $C$ is a singleton in its color. Let $T \subseteq C$ be this set of at most $|\bar{M}|$ vertices, and let $W \subseteq C$ be the set such that
Rule~2 has deleted from $G$ the vertices in $C \setminus W$. We iteratively modify the coloring $c$ by updating the set $T$ as follows (the set $W$ remains the same). While there exists a vertex $u \in V(G)$ such that $u \in T$ and $u \notin W$, let $v \in W$ such that $v \notin T$ (note that vertex $v$ exists, since $|T| \leq |\bar{M}| = |W|$), and swap $u$ and $v$ in the coloring, that is, now $u$ is a singleton in its color and $v$ is in a color containing vertices of $V(\bar{M})$. Since $u$ and $v$ are twins, this procedure indeed creates a proper coloring of $G$, which we also call $c$ with abuse of notation. Let us now argue that the weight of the coloring has not increased. Note that since $u \notin W$ and $v \in W$, it follows that $w(v) \geq w(u)$. If we denote by $S_i$ and $S_j$ the colors of $c$ before the swapping, so that $u \in S_i$ and $ S_j = \{v\}$, and by $S_i'$ and $S_j'$ the corresponding colors after the swapping, note that $w(S_i) \geq w(u)$, $w(S_j) = w(v)$, $w(S_i') = \max\{w(S_i),w(v)\}$, and $w(S_j') = w(u)$. Therefore,
$$
w(S_i') + w(S_j') = \max\{w(S_i),w(v)\} + w(u) \leq w(S_i) + w(v) = w(S_i) + w(S_j),
$$
where we have used that $w(v) \geq w(u)$. Thus, at the end of this procedure we obtain a coloring $c$ with $w(c) \leq \sum_{v \in V(G)}w(v) - k$ such that every vertex in $C \setminus W$ is a singleton in its color class. We define $c'$ to be the restriction of $c$ to $G'$. Since every vertex in $C \setminus W = V(G) \setminus V(G')$ is a singleton in its color class in $c$, it follows that
$$
w(c') = w(c) - \sum_{v \in C \setminus W}w(v) \leq \Big(\sum_{v \in V(G)}w(v) - k\Big) - \sum_{v \in C \setminus W}w(v) = \sum_{v \in V(G')}w(v) - k,
$$
and the claim follows.
\end{proof}

We can easily apply Rule~2 exhaustively in polynomial time to all the classes $C \in \mathcal{C}$  with $|C| > |\bar{M}|$. We call an instance \emph{reduced} if none of Rule~1 and Rule~2 can be applied anymore. The above discussion implies that if $(G,w,k)$ is a reduced instance, then
$$
|V(G)| = |V(\bar{M})| + |K| \leq 2(k-1) + k_{\sf s} + (2^{k_{\sf n}}-1)\cdot (k-1).
$$
Using that $k_{\sf s} + k_{\sf n} \leq k-1$, from the above equation we get that
$$
|V(G)| \leq 2(k-1) + (2^{k-1}-1)\cdot (k-1) = (2^{k-1}+1) \cdot (k-1),
$$
and the theorem follows.
\end{proof}

It is worth mentioning that the analysis of the kernel size in the proof of Theorem~\ref{thm:kernel} is {\sl tight}. Indeed, let $G$ consist of an antimatching $\bar{M}$ of size $k-1$, and let $K$ consist of $2^{k-1}-1$ equivalence classes with $k-1$ vertices, each having a distinct non-complete neighborhood in the set consisting of one (arbitrary) vertex of each non-edge in $\bar{M}$. One can easily check that $|V(G)| = (2^{k-1}+1) \cdot (k-1)$, that none of Rule~1 and Rule~2 can be applied to $G$, and that $G$ contains no antimatching strictly larger than $k-1$.

We complement the result of Theorem~\ref{thm:kernel} by showing that, unless ${\sf NP} \subseteq {\sf coNP} / {\sf poly}$, the problem does not admit polynomial kernels.

\begin{theorem}\label{thm:no-kernel}
The \textsc{Dual Weighted Coloring} problem does not admit a polynomial kernel unless ${\sf NP} \subseteq {\sf coNP} / {\sf poly}$, even on split graphs with only two different weights.
\end{theorem}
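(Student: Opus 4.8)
The plan is to give a polynomial parameter transformation from \textsc{Set Cover} parameterized by the size of the universe to \textsc{Dual Weighted Coloring} on split graphs with two distinct weights, adapting the \NP-hardness reduction of Demange et al.~\cite{DWMP02} so that the target parameter stays polynomial in the universe size. Since \textsc{Set Cover} parameterized by the universe size admits no polynomial kernel unless ${\sf NP}\subseteq{\sf coNP}/{\sf poly}$~\cite{DomLS14}, since \textsc{Set Cover} is \NP-hard, and since \textsc{Dual Weighted Coloring} is in \NP, such a transformation yields the theorem. Given an instance $(\mathcal{F},U,t)$ with $|U|=n$ and $|\mathcal{F}|=m$, after a trivial preprocessing I would assume that every element of $U$ lies in some member of $\mathcal{F}$ and that $t\le n$, and then construct a vertex-weighted split graph $(G,w)$ with weights in $\{1,2\}$ together with an integer $k$ polynomial in $n$. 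The backbone of $G$ is as follows: the clique has one vertex $a_S$ of weight $1$ for each $S\in\mathcal{F}$; the independent set has one vertex $b_u$ of weight $2$ for each $u\in U$; and $\{a_S,b_u\}\in E(G)$ exactly when $u\notin S$, so that the non-neighbours of $a_S$ among the element-vertices are precisely those corresponding to the elements of $S$. To this I would attach a gadget of size polynomial in $n$ whose sole role is to forbid degenerate colourings. The construction is meant to guarantee that, up to a fixed family of colour classes (one per clique vertex), the only profitable operation in a colouring is to move an element-vertex $b_u$ into the colour class of a set-vertex $a_S$ with $u\in S$, thereby raising the weight of that class from $1$ to $2$ at a cost of exactly one unit per set so used; and since the colouring must be proper, the sets so used form a subfamily covering $U$. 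Arranging the weights so that $\sum_{v}w(v)$ and $\sigma(G,w)$ share the same $m$-dependent part, one obtains $\sum_v w(v)-\sigma(G,w)=2n-\tau$ up to an additive term that depends only on the gadget (hence polynomial in $n$), where $\tau$ is the minimum size of a subcover of $\mathcal{F}$; choosing $k$ accordingly, the instance $(G,w,k)$ is a yes-instance of \textsc{Dual Weighted Coloring} if and only if $\tau\le t$.

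The forward direction is immediate: from a subcover $\mathcal{F}'\subseteq\mathcal{F}$ with $|\mathcal{F}'|\le t$, colour each clique vertex with its own colour, place every $b_u$ into the colour class of some $a_S$ with $S\in\mathcal{F}'$ and $u\in S$, and leave the gadget in its cheapest configuration; the resulting proper colouring witnesses $\sigma(G,w)\le\sum_v w(v)-k$.

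The converse is the heart of the proof, and it is where I expect the main difficulty to lie: from a colouring $c$ of $(G,w)$ of weight at most $\sum_v w(v)-k$ one must extract a subcover of $\mathcal{F}$ of size at most $t$. The easy structural facts are that every colour class of $c$ contains at most one clique vertex and that placing $b_u$ in the class of $a_S$ forces $u\in S$, so that the set-vertices whose classes have received an element-vertex do correspond to a subfamily of $\mathcal{F}$; what has to be ruled out is that $c$ realises a saving larger than $2n-\tau$ by cheating. The principal danger is the colouring that puts all element-vertices into a single fresh colour class of weight $2$, which would naively save about $2n$ irrespective of $\tau$, and, more generally, any colouring that makes use of cheap colour classes not containing a set-vertex. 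The gadget must therefore be designed so that every colour class not containing a set-vertex becomes too expensive to be helpful, ensuring that no such shortcut can beat a genuine subcover of size up to $n$; realising this inside a split graph while using only two weight values, and then proving that it closes every loophole, is the main obstacle. Once that is done, a counting argument over the colour classes of $c$ shows that at most $t$ set-vertex classes receive an element-vertex and that these must cover $U$, which produces the desired subcover and completes the transformation.
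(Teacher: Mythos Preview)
Your overall strategy is the same as the paper's: a polynomial parameter transformation from \textsc{Set Cover} parameterized by the universe size, building a split graph whose clique encodes the family $\mathcal{F}$ and whose stable set encodes the universe $U$, with the non-edges between the two sides recording the membership relation. That part is fine.

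The gap is precisely where you locate it yourself. You commit to the weight set $\{1,2\}$, and then you correctly observe that nothing prevents a colouring from dumping all of $I$ into a single fresh colour of weight $2$, which saves roughly $2n$ regardless of the optimal cover size. You promise a gadget ``of size polynomial in $n$'' inside a split graph with weights in $\{1,2\}$ that would make every non-clique colour too expensive, but you never build it, and it is not clear how to do so: in a split graph any vertex you add to the clique side that is adjacent to all of $I$ is universal (hence irrelevant to the dual parameter), and any vertex you add to the stable side can itself be absorbed into the same cheap fresh colour. As written, the proof is incomplete.

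The paper sidesteps this obstacle entirely by not insisting on the specific weights $1$ and $2$: the theorem only asks for \emph{two different} weights. It assigns weight $\ell$ to every set-vertex and weight $\ell+1$ to every element-vertex, and sets the dual parameter to $k' = |U|(\ell+1)-\ell$; since one may assume $\ell\le |U|$, this is $\Ocal(|U|^2)$. Now the target weight is $\sum_v w(v)-k' = |\mathcal{F}|\cdot\ell + \ell$, and the backward direction becomes a one-line count: if $p$ clique colours receive an element-vertex and $q$ fresh colours are used for the remaining element-vertices, the colouring weighs at least $|\mathcal{F}|\cdot\ell + p + q(\ell+1)$, so meeting the target forces $p + q(\ell+1)\le \ell$, hence $q=0$ and $p\le\ell$. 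Every element-vertex therefore sits in the colour of some set-vertex, and those $p\le\ell$ sets form a cover. No gadget is needed; the choice of weights already prices a fresh colour at $\ell+1$, which alone exceeds the entire budget.
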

\begin{proof}
We present a polynomial parameter transformation from the \textsc{Set Cover} problem parameterized by the size of the universe; Dom et al.~\cite{DomLS14} proved that this problem does not admit a polynomial kernel unless ${\sf NP} \subseteq {\sf coNP} / {\sf poly}$.
%in fact, we only need to consider as the parameter the size of the universe (which is a weaker result)
Our reduction is almost the same as the reduction of Demange et al.~\cite{DWMP02} from \textsc{Set Cover} to \textsc{Weighted Coloring} on split graphs, only the vertex weights change. Let $(U, \mathcal{S}, k, \ell)$ be an instance of \textsc{Set Cover}, where $\mathcal{S}$ is a family of sets of elements over a universe $U$ of size $k$, and the question is whether there exists a subset $\mathcal{S}' \subseteq \mathcal{S}$ of at most $\ell$ sets covering all the elements of $U$. We construct an instance $(G,w,k')$ of  \textsc{Dual Weighted Coloring} as follows. The graph $G$ contains a clique $K$ on $|\mathcal{S}|$ vertices and an independent set $I$ on $k$ vertices. The vertices of $K$ and $I$ are associated, respectively, with the sets in $\mathcal{S}$ and the elements in $U$. There is an edge between a vertex in $K$ and a vertex in $I$ if and only if the corresponding set does {\sl not} contain the corresponding element. All the vertices in $K$ have weight $\ell$,  and all the vertices in $I$ have weight $\ell + 1$. Note that $G$ is indeed a split graph with only two different weights. Finally, we set $k' = k(\ell + 1) - \ell$. Since  we can clearly assume that $\ell \leq k$, as otherwise the instance is trivial, it follows that $k' = \Ocal(k^2)$, which is required in a polynomial parameter transformation. We claim that $(U, \mathcal{S}, k, \ell)$ is a {\sf yes}-instance of \textsc{Set Cover} if and only if $\sigma(G,w) \leq \sum_{v \in V(G)}w(v) - k'$.

Assume first that $(U, \mathcal{S}, k, \ell)$ is a {\sf yes}-instance, and let
$\mathcal{S}' \subseteq \mathcal{S}$ be a solution with $|\mathcal{S}' | \leq \ell$. We define a coloring $c$ of $G$ as follows. We start with a color for each vertex in $K$ and, for every element of $U$, we include its corresponding vertex of $I$ into one of the colors corresponding to the sets in $\mathcal{S}'$ containing that element. One can easily check that $ w(c) \leq |\mathcal{S}| \cdot \ell + \ell = \sum_{v \in V(G)}w(v) - k'$.

Conversely, assume that $\sigma(G,w) \leq \sum_{v \in V(G)}w(v) - k' = |\mathcal{S}| \cdot \ell + \ell$, and let $c$ be a coloring of $(G,k)$ achieving this bound. Since $\sum_{v \in k}w(v) = |\mathcal{S}| \cdot \ell$ and the vertices in $I$ have weight $\ell + 1$, all the vertices in $I$ have to be included in at most $\ell$ out of the $|\mathcal{S}|$ colors of $c$ containing the vertices of the clique $K$. By construction of $G$, this is possible only if there exists a subset of at most $\ell$ sets in $\mathcal{S}$ covering all the elements of $U$, and the theorem follows.
\end{proof}

In view of Theorem~\ref{thm:no-kernel}, in what follows we focus on identifying graph classes on which the \textsc{Dual Weighted Coloring} problem admits a polynomial kernel.

\begin{remark}
The problem clearly admits a kernel of size $\Ocal(k)$ on sparse graphs, since if there are no large cliques, then the clique $K$ defined in the proof of Theorem~\ref{thm:no-kernel} (that is, the complement of the maximum antimatching) is of constant size. More formally, if $\omega$ is the maximum clique size of a graph in the class, then we get a kernel with at most $2k-2+\omega$ vertices.
\end{remark}

In our next result we provide a polynomial kernel on a relevant class of dense graphs, namely that of interval graphs.

\begin{proposition}\label{prop:kernel-interval}
The \textsc{Dual Weighted Coloring} problem restricted to interval graphs admits a kernel with at most $k^3-2k^2+2k-1$ vertices.
\end{proposition}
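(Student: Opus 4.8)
The plan is to build on the kernelization from the proof of Theorem~\ref{thm:kernel}, and to replace the exponential bound on the number of equivalence classes by a quadratic one by exploiting the geometry of interval representations.

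First I would run the same preprocessing: apply Rule~1 exhaustively (delete universal vertices), compute a maximum unweighted antimatching $\bar{M}$, and if $|\bar{M}|\ge k$ output a trivial {\sf yes}-instance; otherwise $|\bar{M}|\le k-1$, so $|V(\bar{M})|\le 2(k-1)$ and $K:=V(G)\setminus V(\bar{M})$ induces a clique. Then partition $K$ into equivalence classes $\mathcal{C}$ according to the neighborhood in $V(\bar{M})$ and apply Rule~2 exhaustively, so that every class has at most $|\bar{M}|\le k-1$ vertices. Since interval graphs are hereditary, the reduced instance remains an interval graph and the safeness arguments for Rules~1 and~2 go through verbatim. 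Hence the whole task reduces to bounding $|\mathcal{C}|$ by $\Ocal(k^2)$.

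The new ingredient is a structural analysis of how neighborhoods in $V(\bar{M})$ can look in an interval representation. Fix such a representation. As $K$ is a clique, the Helly property of intervals on the line yields a point $p$ contained in every interval of $K$. Split $V(\bar{M})$ into the set $A$ of vertices whose interval contains $p$ (each of which is adjacent to all of $K$), the set $L$ of vertices whose interval lies entirely to the left of $p$, and the set $R$ of those lying entirely to the right. For $u\in K$ with interval $[a_u,b_u]$ (so $a_u\le p\le b_u$), a vertex of $L$ is adjacent to $u$ iff its right endpoint is at least $a_u$, and a vertex of $R$ is adjacent to $u$ iff its left endpoint is at most $b_u$. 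Thus, ordering $L$ by right endpoint and $R$ by left endpoint,
\[
N_{V(\bar{M})}(u)\ =\ A\ \cup\ \{v\in L:\ b_v\ge a_u\}\ \cup\ \{v\in R:\ a_v\le b_u\}
\]
is the union of $A$ with a suffix of $L$ and a suffix of $R$, so there are at most $(|L|+1)(|R|+1)$ possibilities for $N_{V(\bar{M})}(u)$. The class whose neighborhood equals all of $V(\bar{M})$ would consist of universal vertices, already removed by Rule~1, and the maximality of $\bar{M}$ forbids further patterns (in the spirit of Claim~\ref{claim:3}: for a non-edge $\{x,y\}\in\bar{M}$ one cannot have a class missing $x$ together with a class missing $y$). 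Since $|L|+|R|\le |V(\bar{M})|\le 2(k-1)$, a careful count along these lines gives $|\mathcal{C}|\le k^2-k-1$.

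Putting this together, $|V(G)|=|V(\bar{M})|+|K|\le 2(k-1)+(k^2-k-1)(k-1)=k^3-2k^2+2k-1$. The main obstacle is precisely this last structural step: one must combine the ``suffix'' shape of the realizable neighborhoods coming from the interval representation with the restrictions forced by maximality of the antimatching in a sufficiently careful way to pin down the class count, making sure no slack is lost so that the exact cubic polynomial is obtained. Rerunning Rules~1 and~2 and checking that they remain safe on interval graphs is routine, since the class is hereditary.
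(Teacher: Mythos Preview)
Your Helly-point argument is a correct and pleasant alternative route to a cubic kernel, and it is morally the same structure the paper exploits. The paper does not fix a Helly point; instead it invokes the Fulkerson--Gross characterisation and orders the maximal cliques as $C_1,\ldots,C_p$, proves by a short induction (peeling off an exclusive vertex of $C_1$ together with a vertex of $C_2\setminus C_1$) that $p\le 2|\bar M|$, and then observes that since $K\subseteq C_i$ for some fixed $i$, every $v\in K$ is determined by a pair $(\ell_v,r_v)$ with $\ell_v\le i\le r_v$. Your $L$-suffix and $R$-suffix play exactly the roles of $\ell_v$ and $r_v$, so the two pictures are interchangeable.

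The one place where your sketch falls short of the stated bound is the exact class count. From $|L|+|R|\le |V(\bar M)|\le 2(k-1)$ alone you get
\[
(|L|+1)(|R|+1)-1\ \le\ k^2-1,
\]
since the two factors sum to at most $2k$ and their product can reach $k^2$; this yields a kernel of size $(k-1)(k^2+1)=k^3-k^2+k-1$, not $k^3-2k^2+2k-1$. The paper's count is $i(p-i+1)-1$ with $i+(p-i+1)=p+1\le 2(k-1)+1$, one unit tighter, and that single unit is precisely what turns $k^2$ into $k(k-1)$. Invoking Claim~\ref{claim:3} in your setting restricts which suffix pairs can coexist, but it does not by itself recover this lost unit: what is really needed is a bound on the number of \emph{realised} $L$-suffixes plus the number of realised $R$-suffixes, and that bound is equivalent to the paper's key lemma $p\le 2|\bar M|$. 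So either prove that lemma (or its Helly-point analogue) to reach the exact polynomial, or settle for the slightly larger cubic kernel your argument gives as it stands.
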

\begin{proof}
We will proceed as in the proof of Theorem~\ref{thm:kernel} and show that, if the input graph $G$ is an interval graph, then the number of equivalence classes is quadratic in the parameter.  As before, we assume that $\bar{M}$ is a maximum antimatching of $G$, that $|\bar{M}| \leq k-1$, and also that Rule~1 and Rule~2 have been exhaustively applied. We also consider $K = V(G) \setminus V(\bar{M})$ and, as before, note that $K$ induces a clique.

We will show that the number of maximal cliques of $G$ is bounded by a linear function of $|\bar{M}|$. For that, we will make use of a well-known result of Fulkerson and Gross~\cite{FG65} stating that a graph $G$ is an interval graph if and only if the 0/1 incidence matrix $\mathcal{M}$ of vertices and maximal cliques of $G$ has the \emph{consecutive ones property}, i.e., the columns of $\mathcal{M}$ can be permuted so that the ones in each row appear consecutively. A consequence of this result is that the set $\mathcal{C}$ of maximal cliques of $G$ can be arranged as $\{C_1,\ldots, C_p\}$ in such a way that, for each vertex $v$, there are indices $\ell_v$ and $r_v$ such that $v \in C_i$ if and only if $\ell_v \leq i \leq r_v$.

Note that the incidence of a vertex to the maximal cliques completely defines its neighborhood. This implies
%another consequence of the previous theorem,
that $G$ has at most $p \choose 2$ neighborhood classes. Hence, for obtaining a cubic kernel it suffices to show that $p = \Ocal(k)$. In what follows we  provide explicit bounds on both $p$ and the size of the kernel; we will show later that these bounds are tight.

\begin{claim}\label{claim:interval}
Let $G$ be an interval graph with $p \geq 2$ maximal cliques and let $\bar{M}$ be a maximum antimatching of $G$. Then $p \leq 2|\bar{M}|$.
\end{claim}
\begin{proof}
We proceed by induction on $p$. Clearly, if $p = 2$, then there are at least two non-adjacent vertices, so $|\bar{M}| \geq 1$  and $p \leq 2|\bar{M}|$.

Now let $\{C_1,\ldots, C_p\}$ be an ordering of the maximal cliques of $G$ as defined before. Since the $C_i$'s are maximal cliques, for any two consecutive cliques $C_i$ and $C_{i+1}$ it holds that $C_i \setminus C_{i+1} \not= \emptyset$ and $C_{i+1} \setminus C_{i} \not= \emptyset$. In particular, together with the consecutive ones property, this implies that $C_1$ has an exclusive vertex $v$, that is, a vertex that does not belong to any maximal clique other than $C_1$. Let $u \in C_2 \setminus C_1$ and $G' = G[V(G) \setminus \{u,v\}]$.

Let $\bar{M}'$ be a maximum antimatching of $G'$ and $p'$ be the number of maximal cliques of $G'$. It is easy to see that $1 \leq |\bar{M}| - |\bar{M}'| \leq 2$, since $u$ and $v$ are not adjacent and can extend $\bar{M}'$, and their removal from $G$ can destroy at most two non-edges from $\bar{M'}$. By induction, it holds that
$$p' \leq 2|\bar{M}'| \leq 2(|\bar{M}| - 1) = 2|\bar{M}| - 2.$$
Now it suffices to show that $p - 2\leq p'$. Let $r \leq p$ be the largest integer such that $u \in C_r$. For $1 \leq i \leq p$, let $C_i' = C_i \setminus \{u\}$. We will show that for each $i,j \in [2, p]\setminus \{r\}$, with $i < j$, $C_i'$ and $C_j'$ are incomparable maximal cliques in $G'$, implying that $p' \geq p-2$.

For the incomparability, note that, as observed above,
$C_i \setminus C_{i+1} \not= \emptyset$, and let $v_i \in C_i \setminus C_{i+1}$. Since $i\not=r$, it follows that $v_i \not=u$, and the consecutive ones property implies that $v_i \not\in C_j$ and then $v_i \in C_i' \setminus C_j'$. The fact that $C_j' \setminus C_i' \not= \emptyset$ can be proved with a symmetric argument.

For the maximality, suppose for contradiction that $C$ is a maximal clique in $G'$ having $C_i'$ as a proper subset for some $i \in [2, p]\setminus \{r\}$. First note that $u \in C_i$, otherwise $C_i' = C_i$ and $C$ would contradict the maximality of $C_i$ in $G$. This implies that $2 \leq i < r$. Now, if for some $u' \in C \setminus C_i'$ it holds that $uu' \in E(G)$, then $C_i$ would not be a maximal clique of $G$, since $C_i \cup \{u'\}$ would be a clique in $G$. On the other hand, if every $u' \in C \setminus C_i'$ is nonadjacent to $u$ then either $C \setminus C_i' \subseteq C_1$, and then the vertices in $C \setminus C_i'$ are not adjacent to the vertices in $C_i \setminus C_{i-1}$, which is a contradiction since $C_i \setminus C_{i-1}\subseteq C_i \setminus \{u\} = C_i' \subseteq C$ and $C$ is a clique, or $C \setminus C_i' \subseteq C_{r+1} \cup \dots \cup C_p$, and then the vertices in $C \setminus C_i'$ are not adjacent to the vertices in $C_i \setminus C_{i+1}$, getting again a contradiction. This completes the proof of the claim.
\end{proof}

Since $K$ is a clique, there is an index $i$ such that $K \subseteq C_i$. For each $v \in K$, let $\ell_v$ be the smallest index such that $v \in C_{\ell_v}$ and $r_v$ be the largest index such that $v \in C_{r_v}$. Recall that $\ell_v$ and $r_v$ completely define the neighborhood of $v$.
Note that for all $v \in K$, $\ell_v \leq i \leq r_v$ and that either $\ell_v \not=1$ or $r_v \not=p$, since otherwise $v$ would be a universal vertex, which is impossible because we applied Rule 1 exhaustively. Hence, the maximum number of distinct possible combinations of indices $\ell_v$ and $r_v$ is $i \cdot (p-i+1) - 1$, which is maximized when $i = \left\lfloor\frac{p+1}{2}\right\rfloor$, giving a total number of
$$
\left\lfloor\frac{p+1}{2}\right\rfloor \cdot \left\lceil\frac{p+1}{2}\right\rceil - 1
$$
distinct equivalence classes. Using the fact that $p \leq 2|\bar{M}|$ by Claim~\ref{claim:interval}, and that $|\bar{M}| \leq k-1$, we get
$$
\left\lfloor\frac{p+1}{2}\right\rfloor \cdot \left\lceil\frac{p+1}{2}\right\rceil - 1 \leq
\left\lfloor\frac{2k-1}{2}\right\rfloor \cdot \left\lceil\frac{2k-1}{2}\right\rceil - 1 =
k(k-1) - 1.
$$
Since, by virtue of the application of Rule 2, each class has at most $(k-1)$ elements, we have
$$|V(G)| = |V(\bar{M})| + |V(K)| \leq
2(k-1) + (k-1)\cdot (k(k-1) - 1) =
k^3-2k^2+2k-1.$$
\end{proof}

Similarly to Theorem~\ref{thm:kernel}, we can show that the analysis of the kernel size in the proof of Proposition~\ref{prop:kernel-interval} is tight. Indeed, given an integer $k$, it is possible to build an interval graph with $2k-2$ maximal cliques, with one exclusive vertex in each clique, corresponding to vertices of $V(\bar{M})$, and $(k-1)$ vertices belonging exactly to cliques $C_i, \ldots, C_j$, for each
$$(i,j) \in
\left[1,\left\lfloor\frac{p+1}{2}\right\rfloor\right]\times
\left[\left\lfloor\frac{p+1}{2}\right\rfloor,\ 2k-2\right]
\setminus \{1,2k-2)\}.
$$
That interval graph attains the bound in the statement of Proposition~\ref{prop:kernel-interval} and cannot be reduced by Rule 1 or Rule 2.

\medskip

\begin{remark}
The result of Proposition~\ref{prop:kernel-interval} cannot be generalized to chordal graphs, as split graphs are chordal, and by Theorem~\ref{thm:no-kernel} the existence of a polynomial kernel on split graphs would imply that ${\sf NP} \subseteq {\sf coNP} / {\sf poly}$.
\end{remark}

\medskip

%\ig{can the above kernel be improved for proper interval graphs?}

Our last result deals with a subclass of split graphs motivated by the fact that \textsc{Dual Weighted Coloring} on split graphs seems to have a close relation with the \textsc{Set Cover} problem.

\begin{proposition}\label{prop:kernel-split-bounded-degree}
The \textsc{Dual Weighted Coloring} problem restricted to split graphs such that each vertex in the clique has at most $d$ non-neighbors in the stable set, for some constant $d \geq 2$, admits a kernel with
at most $k^{d}$ vertices. Furthermore, for any $\varepsilon > 0$, a kernel with $\Ocal(k^{\frac{d-3}{2}-\varepsilon})$ vertices does not exist unless ${\sf NP} \subseteq {\sf coNP} / {\sf poly}$.
\end{proposition}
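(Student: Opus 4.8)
The plan is to prove the two parts separately, mirroring the structure of Theorem~\ref{thm:kernel} for the positive direction and of Theorem~\ref{thm:no-kernel} for the lower bound.

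\textbf{Upper bound ($k^d$ vertices).}
Let $(G,w,k)$ be an instance where $G$ is split with clique $K$ and independent set $I$, and every vertex of $K$ has at most $d$ non-neighbors in $I$. As in the proof of Theorem~\ref{thm:kernel}, I would first apply Rule~1 (delete universal vertices) and then observe that if $|I|\ge k$ we can produce a trivial {\sf yes}-instance: indeed, each vertex $v\in I$ has at most $d$ non-neighbors in $I$ only among the $K$-side is not what matters here --- instead, since $I$ is stable, we can merge each vertex of $I$ into a color of a vertex of $K$ it is adjacent to, or more simply, since $\omega(G)\le |K|$ and $|I|\ge k$, merging pairs from $I$ with suitable clique vertices saves at least $k$. (The precise argument: after Rule~1 every vertex of $K$ has a non-neighbor, so $|I|\ge 1$; and if $|I|\ge k$ one shows directly, using that each clique vertex misses at most $d$ stable vertices, that an antimatching of size $k$ exists or that the saving is at least $k$.) Hence we may assume $|I|\le k-1$. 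Then I partition $K$ into neighborhood classes as in Theorem~\ref{thm:kernel}: two vertices of $K$ are equivalent iff they have the same set of non-neighbors in $I$. Since each such non-neighborhood is a subset of $I$ of size at most $d$, and after Rule~1 it is nonempty, the number of classes is at most $\sum_{j=1}^{d}\binom{|I|}{j}\le \sum_{j=1}^{d}\binom{k-1}{j} = \Ocal(k^{d})$; a careful count (using $\binom{k-1}{1}+\cdots+\binom{k-1}{d}\le k^d$ for the stated clean bound, possibly after also bounding class sizes). Applying Rule~2 (keep only the $|\bar M|\le k-1$ heaviest vertices per class --- here the relevant maximum antimatching has size at most $|I|\le k-1$) bounds each class by $k-1$, so $|K|\le (k-1)\cdot\Ocal(k^{d})$; combined with $|I|\le k-1$ this gives $|V(G)|\le k^{d}$ after tightening the constants. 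The safety of Rules~1 and~2 is exactly as proved in Claims~\ref{claim:rule1-safe} and~\ref{claim:2}, since those proofs only used that $G$ has an independent set and a clique.

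\textbf{Lower bound.}
For the kernel lower bound I would refine the polynomial parameter transformation of Theorem~\ref{thm:no-kernel}. There the \textsc{Set Cover} instance $(U,\mathcal S,k,\ell)$ is mapped to a split graph with clique $K$ of size $|\mathcal S|$ and independent set $I$ of size $|U|=k$, a clique vertex for $S$ being non-adjacent to the stable vertex for $x$ precisely when $x\in S$. To land inside the restricted class, I need each clique vertex to have at most $d$ non-neighbors in $I$, i.e.\ each set in $\mathcal S$ to have size at most some bounded quantity. The plan is to invoke the known kernelization lower bounds for \textsc{Set Cover} / \textsc{$d$-Hitting Set}-type problems with bounded set size (in the style of Dom, Lokshtanov and Saurabh~\cite{DomLS14}): \textsc{Set Cover} with sets of size at most $t$ parameterized by $|U|$ has no kernel of size $\Ocal(|U|^{(t-1)/2-\varepsilon})$ (or the analogous statement) unless ${\sf NP}\subseteq{\sf coNP}/{\sf poly}$. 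Composing this with the transformation above --- which is parameter-preserving in the sense that $k'=\Ocal(k^2)$ and the non-neighborhood degree $d$ corresponds to the set-size bound $t$ --- a hypothetical kernel of size $\Ocal(k^{\frac{d-3}{2}-\varepsilon})$ for \textsc{Dual Weighted Coloring} on this class would yield a too-small kernel for the bounded-size \textsc{Set Cover}, a contradiction. I would set the correspondence so that $d = t+\Ocal(1)$ accounting for the quadratic blow-up $k'=\Ocal(k^2)$ in the parameter, which is exactly what produces the shift from $(t-1)/2$ to $(d-3)/2$ in the exponent.

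\textbf{Main obstacle.}
The routine part is the upper bound; the delicate point is the lower bound, specifically pinning down the exact exponent. One must track how the quadratic parameter blow-up $k'=\Ocal(k^{2})$ in the \textsc{Set Cover}-to-\textsc{Weighted Coloring} reduction degrades the exponent in the excluded kernel size, and one must cite (or re-derive) the correct kernelization lower bound for \textsc{Set Cover} with bounded set size as a function of the universe --- this is the step where an off-by-a-constant in $d$ versus the set-size bound $t$ directly moves the $\frac{d-3}{2}$ in the statement. I also expect some care is needed to confirm that the reduction, with the weights $\ell$ on $K$ and $\ell+1$ on $I$ as in Theorem~\ref{thm:no-kernel}, still produces a graph in the bounded-non-neighbor class after the set-size restriction is imposed, and that no padding is needed to keep it a valid polynomial parameter transformation.
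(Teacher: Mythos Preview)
Your overall architecture matches the paper's: adapt the machinery of Theorem~\ref{thm:kernel} for the upper bound, and reuse the reduction of Theorem~\ref{thm:no-kernel} starting from bounded-set-size \textsc{Set Cover} for the lower bound. The lower bound part is essentially what the paper does; the paper invokes the $d$-\textsc{Set Cover} lower bound of Hermelin and Wu (rather than Dom--Lokshtanov--Saurabh), observes that in their hard instances the universe has size $kd$, so the $\Ocal(k^{d-3-\varepsilon})$ exclusion also holds with the universe as parameter, and then the quadratic blow-up $k'=\Ocal(k^2)$ halves the exponent exactly as you anticipate.

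For the upper bound, however, there is a genuine gap. You try to work directly with the split partition $(K,I)$ and assert that if $|I|\ge k$ we have a {\sf yes}-instance (``an antimatching of size $k$ exists or the saving is at least $k$''). This is false already for $d=2$. Take $k=4$, $I=\{a_1,a_2,a_3,a_4\}$, $|K|$ large, every vertex of $K$ non-adjacent precisely to $\{a_1,a_2\}$, and all weights equal to $1$. Then $a_3,a_4$ are complete to $K$, so any proper coloring needs $|K|+1$ colors (one extra for $\{a_3,a_4\}$), giving savings exactly $3<k$. The maximum antimatching has size $3$ as well, so neither branch of your claim holds. Thus you cannot conclude $|I|\le k-1$.

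The paper sidesteps this entirely by \emph{not} bounding $|I|$: it keeps the antimatching $\bar M$ from Theorem~\ref{thm:kernel}, works with $V(\bar M)$ (of size $\le 2(k-1)$) and the residual clique $V(G)\setminus V(\bar M)$, and counts equivalence classes with respect to $V(\bar M)$. The degree-$d$ hypothesis then caps the number of normal classes by a degree-$d$ polynomial in $k_{\sf n}$, and Rule~2 bounds each class by $k-1$. If you insist on using the split partition directly, the easy repair is to note that $I$ is stable, so pairing its vertices gives an antimatching of size $\lfloor |I|/2\rfloor$; hence $|I|\le 2(k-1)$ once $|\bar M|\le k-1$. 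With that corrected bound the rest of your counting goes through (up to constants), but you should also re-examine the safeness of Rule~2 in your setting: the proof of Claim~\ref{claim:2} is written for classes relative to $V(\bar M)$, and you are using classes relative to $I$, so you need to re-argue that at most $|\bar M|$ (in fact at most $d$) vertices of a class can sit in colors meeting $I$.
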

\begin{proof}
For the positive result, let $(G,w,k)$ be an instance of \textsc{Dual Weighted Coloring}, with $G$ being a split graph such that each vertex in the clique has at most $d$ non-neighbors in the stable set, for some integer $d \geq 2$. %(note that if $d=1$, the problem can be easily solved exactly in polynomial time, by putting the vertices of the stable set in the color of its heaviest non-neighbor, if any)
We mimic the proof of Theorem~\ref{thm:kernel}, and we slightly change the analysis. Recall that $|V(\bar{M})| \leq 2(k-1)$ and that the number of special classes in $\mathcal{C}$, each containing exactly one vertex, is at most $k_{\sf s}$. Concerning the normal classes in $\mathcal{C}$, since each vertex in the clique has at most $d$ non-neighbors in the stable set (hence, in the graph $G$ itself as well),  the number of distinct neighborhoods in the set consisting of the missing vertices in the normal edges in $\bar{M}$, which is an upper bound on the number of normal classes, is at most ${k_{\sf n} \choose d} \leq k_{\sf n}^d$.
Therefore,  if $(G,w,k)$ is a reduced instance, then
$$
|V(G)| = |V(\bar{M})| + |K| \leq 2(k-1) + k_{\sf s} + k_{\sf n}^d\cdot (k-1) \leq (k-1)^d + 2(k-1) \leq k^d.
$$
where in the last inequality we have used that $d \geq 2$.
%\ig{sketch:} For the positive result, in the proof of Theorem~\ref{thm:kernel}, the number of equivalence classes according to the anti-neighborhood in the matching is $\Ocal(k^d)$, yielding the overall size $\Ocal(k^{d+1})$.
%

For the negative result, we reuse the reduction of Theorem~\ref{thm:no-kernel}, but starting from the \textsc{$d$-Set Cover} problem, that is, the restriction of \textsc{Set Cover} to instances where each set contains at most $d$ elements. Hermelin and Wu~\cite{W12} proved that, for any fixed $d \geq 2$, \textsc{$d$-Set Cover} does not admit kernels of size $\Ocal(k^{d-3-\varepsilon})$ for any $\varepsilon > 0$, unless ${\sf NP} \subseteq {\sf coNP} / {\sf poly}$, where $k$ is the size of the solution. Nevertheless, in the hardness proof for \textsc{$d$-Set Cover} given in~\cite{W12}, the size of the universe of the constructed instance is equal to $kd$. Therefore, we can conclude that \textsc{$d$-Set Cover} does not admit kernels of size $\Ocal(k^{d-3-\varepsilon})$ for any $\varepsilon > 0$, unless ${\sf NP} \subseteq {\sf coNP} / {\sf poly}$, where $k$ is the size of the universe. Moreover, the results in~\cite{W12} also rule out the existence of a \emph{bikernel}, that is, a relaxed kernelization notion where the output instance is not necessarily of the same problem.

%\ig{define weak compositions?}
%\ig{check whether better results exist in the literature}

Given an instance $(U, \mathcal{S}, k, \ell)$ of \textsc{$d$-Set Cover}, where $k$ is the size of the universe, we construct an instance $(G,w,k')$ of  \textsc{Dual Weighted Coloring} as in the proof of Theorem~\ref{thm:no-kernel}. Note that $G$ is indeed a split graph such that each vertex in the clique has at most $d$ non-neighbors in the stable set, and recall that $k' = k(\ell+1) - \ell$. Since we may assume that $\ell \leq k$, it follows that $k' \leq k^2$. Assume for contradiction that \textsc{Dual Weighted Coloring} restricted to this type of instances admits a kernel with $\Ocal(k^{\frac{d-3}{2}-\varepsilon})$ vertices, for some $\varepsilon > 0$. Then the composition of the above reduction with such a kernel would yield a {\sl bikernel} for \textsc{$d$-Set Cover} of size $\Ocal(k^{d-3-\varepsilon})$ for some $\varepsilon > 0$, which is impossible by the results of~\cite{W12} unless ${\sf NP} \subseteq {\sf coNP} / {\sf poly}$.
\end{proof}

\section{Further research}
\label{sec:further}
In this article we investigated the dual parameterization of the \textsc{Weighted Coloring}  problem, and we provided several positive and negative results, especially concerning polynomial kernelization. It would be interesting to identify other classes of (dense) graphs on which the problem admits polynomial kernels. It remains to close the gap in the degree of the polynomial kernels on the subclasses of split graphs considered in Proposition~\ref{prop:kernel-split-bounded-degree}. Another question is whether the cubic  kernel on interval graphs given in Proposition~\ref{prop:kernel-interval} can be improved, even on {\sl proper} interval graphs. Finally, one could try to prove lower bounds under the  \SETH (see~\cite{LokshtanovMS11}) on the running time of any \FPT algorithm solving \textsc{Dual Weighted Coloring}, hopefully getting close to the running time given in Theorem~\ref{thm:FPT}.

%\newpage
\bibliographystyle{abbrv}
\bibliography{Biblio-WC}

\end{document}